\documentclass[english,journal]{IEEEtran}
\usepackage[T1]{fontenc}
\usepackage[latin9]{inputenc}
\usepackage{color}
\usepackage{babel}
\usepackage{amsmath}
\usepackage{amsthm}
\usepackage{amssymb}
\usepackage{graphicx}
\usepackage[unicode=true,pdfusetitle,
 bookmarks=true,bookmarksnumbered=false,bookmarksopen=false,
 breaklinks=false,pdfborder={0 0 0},pdfborderstyle={},backref=false,colorlinks=true]
 {hyperref}

\makeatletter
\theoremstyle{plain}

\theoremstyle{remark}

\theoremstyle{definition}

\theoremstyle{assumption}

\usepackage[ruled]{algorithm2e}
\newcommand{\nosemic}{\renewcommand{\@endalgocfline}{\relax}}
\newcommand{\dosemic}{\renewcommand{\@endalgocfline}{\algocf@endline}}
\usepackage{cite}
\usepackage[paper=letterpaper,top=0.75in,bottom=0.75in,right=0.75in,left=0.75in]{geometry}
\usepackage{subfigure}
\usepackage{enumerate}
\newtheorem{assumption}{Assumption}

\makeatother

\newtheorem{remark}{Remark}

\newtheorem{theorem}{Theorem}

\providecommand{\definitionname}{Definition}
\providecommand{\remarkname}{Remark}
\providecommand{\theoremname}{Theorem}

\begin{document}
\newgeometry{top=1in,bottom=0.75in,right=0.75in,left=0.75in}
\IEEEoverridecommandlockouts 
\title{Parameter Update Laws for Adaptive Control with Affine Equality Parameter Constraints}
\author{Ashwin P. Dani  \thanks{A. P. Dani is with the Department of Electrical and Computer Engineering at University of Connecticut, Storrs, CT 06269.}}
\maketitle

\begin{abstract}
In this paper, constrained parameter update laws for adaptive control with convex equality constraint on the parameters are developed, one based on a gradient only update and the other incorporating concurrent learning (CL) update. The update laws are derived by solving a constrained optimization problem with affine equality constraints. This constrained problem is reformulated as an equivalent unconstrained problem in a new variable, thereby eliminating the equality constraints. The resulting update law is integrated with an adaptive trajectory tracking controller, enabling online learning of the unknown system parameters. Lyapunov stability of the closed-loop system with the equality-constrained parameter update law is established. The effectiveness of the proposed equality-constrained adaptive control law is demonstrated through simulations, validating its ability to maintain constraints on the parameter estimates, achieving convergence to the true parameters for CL-based update law, and achieving asymptotic and exponential tracking performance for constrained gradient and constrained CL-based update laws, respectively.
\end{abstract}

\section{Introduction}
Adaptive control is a control method which achieves control objectives in the presence of parametric model uncertainties. It has a rich history of developments over several decades \cite{annaswamy2021historical,ioannou1996robust,krstic1995nonlinear}. Problems such as parameter drifts and the identification of model parameters while achieving controller tracking/regulation objectives have been addressed in several papers. In this paper, we address the problem of injecting prior knowledge about the parameters in the form of constraints on the parameter update laws. Specifically, we design parameter update laws that incorporate equality constraints on the parameter estimates, given in the form of $A\theta = d$. 

Various parameter update laws have been proposed for adaptive control. The gradient update law is one of the most basic forms studied. To prevent parameter drift, $\sigma$ and $e$-modification algorithms are developed. Smooth projection algorithms keep the parameters bounded within a prescribed convex set \cite{cai2006sufficiently,ioannou1996robust}. The $\mathcal{L}_1$-adaptive controller uses a low-pass filter in the feedback loop to enforce desired transient performance while allowing fast adaptation. Parameter update laws that use both tracking and prediction errors, referred to as composite adaptation \cite{Slotine1991} have been developed, and can achieve fast parameter convergence while achieving the control objective. These parameter update laws typically assume persistency of excitation (PE) of the regressor to achieve parameter convergence. Using least squares technique of regression problem, parameter update laws called concurrent learning (CL) and integral concurrent learning (ICL) are developed. The CL/ICL-based parameter update laws can achieve parameter convergence with finite excitation (FE) condition, which is a weaker/milder condition than PE. Dynamic regressor extension and mixing (DREM) can achieve parameter convergence under milder conditions than PE \cite{ortega2020new}. Using the techniques in optimization for parameter learning, momentum-based tools have been used to design parameter update laws in \cite{somers2024online,le2024accelerated}, with the goal of improving transient performance. For deep neural network, parameter update laws are designed in \cite{le2021real}.

In certain applications, additional constraints on the parameters, such as symmetry with eigenvalue bounds or positive definiteness of parameter matrices, are important for ensuring physically consistent parameter estimates and improving controller robustness. The methods in \cite{boffi2021implicit, lee2018natural} develop parameter update laws based on Bregman divergence metrics to enforce positive definiteness of parameter matrices. The method in \cite{moghe2022projection} incorporates symmetric matrix constraints with eigenvalue bounds into the parameter estimation law. These approaches enable the incorporation of specific structural constraints present in the system dynamics. In \cite{hart2023lyapunov}, an adaptive controller that incorporates knowledge of dynamics is developed using Lyapunov-based control and parameter update law design. In some applications, it is critical to strictly enforce upper and lower bound constraints on the parameters. For example, this arises when estimating control effectiveness matrix $g(x)$ in the control-affine system $\dot{x}=f(x)+g(x)u$ \cite{DaniAACRL2025}, or when parameter constraints require maintaining the norm of the parameter vector within prescribed bounds. Constrained parameter update laws using Barrier methods, as studied in optimization, are developed in \cite{DaniCDC2025,DaniAACRL2025}.

In this paper, a new parameter update law that incorporates affine equality constraints of the form $A\theta = d$ on the parameter estimates is developed. The optimization viewpoint of developing parameter update law shown in \cite{boffi2021implicit,fradkov1979speed,ioannou1996robust} is used to incorporate such constraints. As shown in \cite{boffi2021implicit,fradkov1979speed}, the objective function used to derive gradient update law is convex. When affine equality constraints on the parameters (which are convex constraints) are added, the resulting parameter estimation problem becomes a convex program. Two new equality-constrained parameter update laws are designed - one based on a gradient update law and the other based on a CL-based update law. The objective function is convex for the gradient-based update law, and strictly convex for the CL-based update law. To design parameter update laws with equality constraints, an equivalent unconstrained optimization is formulated by projecting the parameter estimates onto the constrained set \cite{Boyd2004}. This yields an  optimization problem in a new reduced-dimensional variable $z(t)$, where the parameter estimate $\hat{\theta}(t)$ is a function of $z(t)$. Explicit formulas for $\hat{\theta}(t)$ in terms of $z(t)$ are derived, where $z(t)$ is updated using a differential equation, resulting in a reduced-dimensional parameter update law. Stability of the parameter update law, together with the tracking error dynamics induced by a tracking controller, is established via Lyapunov analysis. Using this constraint-elimination formulation of the convex program, there is no need to introduce Lagrange multipliers to enforce the parameter constraints, which simplifies the stability analysis and yields a reduced-dimensional update law. Both equality-constrained gradient and CL-based parameter update laws are tested in simulations using constraints of the form $\theta_i-\theta_j=0$. The simulations validate that the proposed update laws maintain the parameter constraints for all time while achieving the control objectives. In the simulations, the parameters may not converge to true values for the equality-constrained gradient-based update law, whereas the parameters convergence is obtained for equality-constrained CL-based update law.

Notations: For any matrix $A\in \mathbb{R}^{n \times m}$, $\mathcal{N}(A)$ denotes the null space, $\mathcal{R}(A)$ denotes the range space. For a square matrix $A$, $A\succ 0$ denotes positive definite matrix $A$

\section{System Model and Control Objective}
\subsection{System Dynamics}
Consider the following system
\begin{equation}
\dot{x}=f(x) + u \label{eq:SystemModel}
\end{equation}
where $x(t) \in \mathbb{R}^n$ is the system state, $u\in\mathbb{R}^n$ is the control input, $f:\mathbb{R}^n\rightarrow \mathbb{R}^n$ is a locally Lipschitz continuous function which is linearly parametrizable, defined as
\begin{align}
f(x) = Y(x)\theta \label{eq:LinParam}
\end{align}
where $Y: \mathbb{R}^n \rightarrow \mathbb{R}^{n \times p}$ is the regressor matrix and $\theta \in \mathbb{R}^p$ is an unknown parameter. The parameters are such that they satisfy following affine equality constraints
\begin{equation}
    A\theta = d \label{eq:EqualityConstraint}
\end{equation}
where $d \in \mathbb{R}^m$ and $A \in \mathbb{R}^{m\times p}$ are constants.
An example of such a constraint would be
\begin{equation}
    \exists i,j: \theta_i(t) - \theta_j(t) = 0, \quad i,j \in \{1,..,p\}, \;\;i \neq j
\end{equation}
\begin{remark}
    The equality constraints are convex since the constraints are affine in $\theta$. \label{rem:rem1}
\end{remark}
\begin{remark}
    Such parameter constraints imply that there is some prior knowledge about parameters, e.g., robot link masses are equal to each other for robot dynamics. \label{rem:rem2}
\end{remark} 
\begin{assumption}
    The matrix $A$ is full row rank and $\exists$ $\kappa_1, \kappa_2>0$ such that $\kappa_1 I \leq AA^T \leq \kappa_2I$. \label{ass:Assumption1}
\end{assumption}

\subsection{Controller Objective}
The control objective is to track a desired trajectory $x_d(t) \in \mathbb{R}^n$ and compute parameter vector estimate $\hat{\theta}(t)\in \mathbb{R}^p$ while maintaining prescribed equality constraints of the form (\ref{eq:EqualityConstraint}) on the parameter estimates. For the control design, let's define tracking and parameter estimation errors as follows
\begin{equation}
    e(t) = x(t) - x_d(t), \quad
    \tilde{\theta}(t) = \theta - \hat{\theta}(t)
    \label{eq:regulation_error}
\end{equation}
where $\tilde{{\theta}}(t)\in\mathbb{R}^p$ is the parameter estimation error.

\subsection{Adaptive Controller Design} \label{sec:ControlDesign}
To achieve the control objective, an adaptive trajectory tracking controller is designed as
\begin{equation}
u = \dot{x}_d - Y\hat{\theta} - ke \label{eq:Control}
\end{equation}
where $\dot{x}_d(t)$ is the derivative of the desired trajectory, and $k>0$ is the control gain. Taking the time derivative of (\ref{eq:regulation_error}) and substituting the controller (\ref{eq:Control}), the following closed-loop error system can be obtained 
\begin{equation}
    \dot{e} = Y\tilde{\theta} -ke  \label{eq:ErrorDyn}
\end{equation}
The parameter estimation error dynamics is written as
\begin{equation}
\dot{\tilde{\theta}} = -\dot{\hat{\theta}}
\end{equation}
In next sections, gradient and CL-based parameter update laws are presented that satisfy the affine equality constraints of the form (\ref{eq:EqualityConstraint}).

\section{Equality-Constrained Gradient Update Law}
In this section, an equality-constrained gradient parameter update law is designed. The affine equality constraint on the parameter estimates can be written using (\ref{eq:EqualityConstraint}) as follows
\begin{equation}
    A\hat{\theta} = d. \label{eq:EqualityConstraintEst}
\end{equation}
To design the constrained parameter update law, consider the following constrained minimization problem
\begin{align}
    &\hat{\theta}^* = \mathrm{min}_{\hat{\theta}} \; \gamma e^TY\tilde{\theta}  \nonumber \\
    \mathrm{s.\;t.}&  \; c(\hat{\theta}) = A\hat{\theta} - d = 0,  \label{eq:ConGradProb1}
\end{align}
where $\gamma > 0$ is a constant gain. An equivalent constraint free formulation of the convex optimization can be written by formulating a new variable $z \in \mathbb{R}^{p-m}$ (see \cite{Boyd2004} Chapter 4, Eliminating Equality Constraints) 
\begin{align}
    \hat{\theta}_0 &= A^T(AA^T)^{-1}d \nonumber \\
    \hat{\theta} &= \hat{\theta}_0 + Fz \label{eq:newVarGrad}
\end{align}
where $\hat{\theta}_0 \in \mathbb{R}^p$ represents a solution to equality constraint $c(\hat{\theta})=0$, $F \in \mathbb{R}^{p\times(p-m)}$ is selected such that $\mathcal{R}(F) = \mathcal{N}(A)$. 
\begin{remark}
    One way to choose F is to perform singular value decomposition (SVD) of $A=U \Sigma V^T$ and choose last $p-m$ columns of $V$ corresponding to $\mathcal{N}(A)$. This will ensure that $F$ is full column rank, i.e., rank $p-m$ and $\mathcal{R}(F) = \mathcal{N}(A)$. \label{rem:rem3}
\end{remark}
The optimization problem can be reformulated in new variable $z$ and the equality constraint can be eliminated as follows
\begin{align}
    z^* = \mathrm{min}_{z} f(z) = \mathrm{min}_{z} \; \gamma e^TY(\theta - \hat{\theta}_0 - Fz)   \label{eq:UnconGradProbNewVar}
\end{align}
The convex optimization problems (\ref{eq:ConGradProb1}) and (\ref{eq:UnconGradProbNewVar}) are equivalent (see \cite{Boyd2004}) and the optimal solution satisfies
\begin{equation}
    \hat{\theta}^* =  \hat{\theta}_0 + Fz^* \label{eq:optimalThetaGrad}
\end{equation}
The gradient dynamics of $z$ can be computed using (\ref{eq:UnconGradProbNewVar}) as
\begin{equation}
    \dot{z} = -\nabla_z f(z) = \gamma F^TY^Te 
\end{equation}
The parameter update law in reduced dimension is given by
\begin{align}
    \hat{\theta} &= \hat{\theta}_0 + Fz \nonumber \\
    \dot{z} &= \gamma F^TY^Te  \label{eq:EqualityConstGradUpdateLaw}
\end{align}
\subsection{Error dynamics}
The tracking error dynamics from (\ref{eq:ErrorDyn}) is expressed as
\begin{align}
    \dot{e} = -k e + Y\tilde{\theta}
\end{align}
Using (\ref{eq:EqualityConstGradUpdateLaw}) the parameter estimation error dynamics is given by
\begin{align}
    \dot{\tilde{\theta}} &= -\dot{\hat{\theta}} = -\dot{\hat{\theta}}_0 - F\dot{z} \nonumber \\
    \dot{\tilde{\theta}} &= - \gamma F F^TY^Te     \label{eq:thetaTildeGrad} 
\end{align}

\subsection{Stability Analysis}
In this subsection, stability of the tracking error dynamics (\ref{eq:ErrorDyn}) and equality-constrained gradient parameter update law (\ref{eq:EqualityConstGradUpdateLaw}) is established.
\begin{theorem}
    If Assumption \ref{ass:Assumption1} is satisfied, for the system shown in (\ref{eq:SystemModel}), the equality-constrained parameter update law (\ref{eq:EqualityConstGradUpdateLaw}) and the adaptive controller (\ref{eq:Control}) ensure global asymptotic tracking, i.e., 
    \begin{equation}
        \Vert e(t) \Vert \rightarrow 0 \quad \mathrm{as} \quad t \rightarrow \infty
    \end{equation} and bounded parameter estimation error with constrained satisfaction on the parameter estimates.
\end{theorem}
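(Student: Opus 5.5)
The plan is a standard Lyapunov argument carried out in the reduced coordinate $z$ rather than in $\hat{\theta}$. The key preliminary observation is that the true parameter also lies on the affine set. Since $A\theta=d$ by (\ref{eq:EqualityConstraint}), and Assumption \ref{ass:Assumption1} makes $AA^T$ invertible, we have $A\hat{\theta}_0 = AA^T(AA^T)^{-1}d = d$. Hence $A(\theta-\hat{\theta}_0)=0$, i.e., $\theta-\hat{\theta}_0\in\mathcal{N}(A)=\mathcal{R}(F)$. So there exists a constant $z_\theta\in\mathbb{R}^{p-m}$ with $\theta=\hat{\theta}_0+Fz_\theta$, and it is unique because $F$ has full column rank (Remark \ref{rem:rem3}). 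Defining $\tilde{z}=z_\theta-z$ gives
\begin{equation*}
\tilde{\theta}=F\tilde{z}, \qquad \dot{\tilde{z}}=-\dot{z}=-\gamma F^TY^Te,
\end{equation*}
which is consistent with (\ref{eq:thetaTildeGrad}).

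Constraint satisfaction is immediate and holds for all $t$, independently of the dynamics. From (\ref{eq:EqualityConstGradUpdateLaw}), $A\hat{\theta}(t)=A\hat{\theta}_0+AFz(t)=d+0$, because the columns of $F$ lie in $\mathcal{N}(A)$.

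For stability I would use the candidate
\begin{equation*}
V(e,\tilde{z})=\tfrac{1}{2}e^Te+\tfrac{1}{2\gamma}\tilde{z}^T\tilde{z},
\end{equation*}
which is positive definite and radially unbounded in $(e,\tilde{z})$. Along (\ref{eq:ErrorDyn}) with $Y\tilde{\theta}=YF\tilde{z}$,
\begin{equation*}
\dot{V}=-ke^Te+e^TYF\tilde{z}-\tilde{z}^TF^TY^Te=-k\Vert e\Vert^2\le 0 ,
\end{equation*}
since the cross terms cancel exactly. Therefore $e,\tilde{z}\in\mathcal{L}_\infty$, and $\tilde{\theta}=F\tilde{z}$ is bounded. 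Integrating $\dot{V}$ gives $\int_0^\infty\Vert e\Vert^2\,dt\le V(0)/k$, so $e\in\mathcal{L}_2$.

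The step needing the most care is the final Barbalat argument, because it requires $\dot{e}$ to be bounded. Under the standard assumption that $x_d,\dot{x}_d$ are bounded, $x=e+x_d$ is bounded. Since $Y$ is continuous (from local Lipschitzness of $f$), $Y(x)$ is bounded along trajectories, so $\dot{e}=-ke+YF\tilde{z}\in\mathcal{L}_\infty$ and $e$ is uniformly continuous. Barbalat's lemma, applied to $e\in\mathcal{L}_2\cap\mathcal{L}_\infty$ with $\dot{e}\in\mathcal{L}_\infty$, then yields $\Vert e(t)\Vert\to0$. Since $V$ is radially unbounded and the bounds hold from any initial condition, the result is global. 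Parameter convergence is not claimed, consistent with the merely convex (not strictly convex) gradient objective.
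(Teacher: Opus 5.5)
Your proof is correct. It follows the same skeleton as the paper: a quadratic Lyapunov function, exact cancellation of the cross term, then Barbalat. The mechanism for the cancellation is different, though.

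The paper keeps $\tfrac{1}{2\gamma}\tilde{\theta}^T\tilde{\theta}$ and is left with the term $\tilde{\theta}^T(I-FF^T)Y^Te$. It removes this term using $A\tilde{\theta}=0$ and the fact that $FF^T$ is the orthogonal projector onto $\mathcal{N}(A)$. That fact relies on the orthonormal-column choice of $F$ in Remark~\ref{rem:rem3}, not merely on $\mathcal{R}(F)=\mathcal{N}(A)$.

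You instead write $\tilde{\theta}=F\tilde{z}$ and use $\tfrac{1}{2\gamma}\tilde{z}^T\tilde{z}$, which equals $\tfrac{1}{2\gamma}\tilde{\theta}^T(FF^T)^{+}\tilde{\theta}$. The cross terms then cancel identically for any full-column-rank $F$ with $\mathcal{R}(F)=\mathcal{N}(A)$. When $F^TF=I$ the two Lyapunov functions coincide, so your argument is a mild generalization of the paper's.

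Your closing step applies Barbalat to $e\in\mathcal{L}_2\cap\mathcal{L}_\infty$ with $\dot{e}\in\mathcal{L}_\infty$. This is equivalent to the paper's step, which applies Barbalat to $\dot{V}$ via bounded $\ddot{V}$. Both need $Y(x(t))$ to be bounded. The paper takes this for granted from $e,\tilde{\theta}\in\mathcal{L}_\infty$ alone; your explicit appeal to a bounded $x_d$ closes that gap.

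One small slip: local Lipschitzness of $f=Y\theta$ does not imply that $Y$ is continuous. A discontinuous column of $Y$ multiplying a zero entry of $\theta$ leaves $f$ unaffected. Simply assume $Y$ is continuous, as is standard and as holds in the simulation example.
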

\begin{proof}
    Let $y = [e^T, \tilde{\theta}^T]^T \in \mathbb{R}^{n+p}$ be an auxiliary vector. Consider a Lyapunov function
\begin{equation}
    V(y) = \frac{1}{2}e^Te + \frac{1}{2\gamma}\tilde{\theta}^T\tilde{\theta}
\end{equation}
The bounds on $V(y)$ are
\begin{equation}
    \lambda_1\Vert y \Vert^2 \leq V(y) \leq \lambda_2 \Vert y \Vert^2
\end{equation}
where $\lambda_1 = \mathrm{min}\left(\frac{1}{2},\frac{1}{2\gamma}\right)$, $\lambda_2 = \mathrm{max}\left(\frac{1}{2},\frac{1}{2\gamma}\right)$. Time derivative of $V(y)$ yields
\begin{align}
    \dot{V} &= e^T(-ke +Y\tilde{\theta}) + \tilde{\theta}^T(-FF^TY^Te) \nonumber \\
    \dot{V} &= e^T(-ke) + \tilde{\theta}^T(Y^Te-FF^TY^Te) \nonumber \\
    \dot{V} &= -ke^Te + \tilde{\theta}^T(I-FF^T)Y^Te
\end{align}
Since $\theta$ satisfies $A\theta = d$ and $\hat{\theta}$ is designed to satisfy $A\hat{\theta} = d$, we have
\begin{equation}
    A(\theta - \hat{\theta}) = 0, \implies A\tilde{\theta} = 0
\end{equation}
This implies that $\tilde{\theta} \in \mathcal{N}(A)$. Since $F$ is constructed using the orthonormal columns of the $\mathcal{N}(A)$ (Remark \ref{rem:rem3}), $FF^T$ is an orthonormal projector onto  $\mathcal{N}(A)$. So, for any $\tilde{\theta} \in \mathcal{N}(A)$
\begin{equation}
    FF^T\tilde{\theta} = \tilde{\theta} \implies (I - FF^T)\tilde{\theta} = 0 \label{eq:orthoNormProj}
\end{equation}
Using (\ref{eq:orthoNormProj}) the derivative of Lyapunov function can be written as
\begin{equation}
    \dot{V} = -ke^Te \leq 0
\end{equation}
Since $V(y)$ is positive definite (PD) and $\dot{V}$ is negative semi-definite (NSD), the system (\ref{eq:ErrorDyn})-(\ref{eq:thetaTildeGrad}) is stable in the sense of Lyapunov, which means $y(t)\in\mathcal{L}_\infty$. Computing the derivative of $\dot{V}$  yields
\begin{equation}
    \ddot{V} = -ke^T(-ke+Y\tilde{\theta})
\end{equation}
Since $e(t), \tilde{\theta}(t) \in \mathcal{L}_\infty$, $\ddot{V} \in \mathcal{L}_\infty$, which means $\dot{V}$ is uniformly continuous \cite{Slotine1991}. Using Barbalat's lemma $\dot{V} \rightarrow 0$ as $t \rightarrow \infty$, which implies $\Vert e \Vert \rightarrow 0$ as $t \rightarrow \infty$.
\end{proof}

\section{Equality-Constrained Concurrent Learning Update Law}
In this section, CL-based parameter update law with affine equality constraints is derived. For the CL-based equality-constrained parameter update law, prior data of state and control input called as a history stack $\mathcal{H} = \{x_k,\hat{u}_k,\dot{\hat{x}}_k\}_{k=1}^{k=h}$ is collected. Before discussing the parameter update law, finite excitation (FE) assumption is stated.
\begin{assumption} 
For the history stack $\mathcal{H} = \{x_k,\hat{u}_k,\dot{\hat{x}}_k\}_{k=1}^{k=m}$ the following condition is satisfied
\begin{equation}
    \sigma_2I \geq Y_R \geq \sigma_1I
\end{equation}
where $Y_R = \sum_{k=1}^h Y_k^T Y_k$, $\sigma_1, \sigma_2 \in \mathbb{R}^+$. The numerically computed derivatives of $x(t)$, $\dot{\hat{x}}_k$ computed at $k$th data point satisfies $\Vert \dot{\hat{x}}_k - \dot{x}_k \Vert \leq \epsilon$ for a small positive $\epsilon \in \mathbb{R}^+$. \label{ass:FiniteExcitation}
\end{assumption}
\begin{remark}
Assumption \ref{ass:FiniteExcitation} is a finite excitation condition that can be verified in real-time \cite{chowdhary2013concurrent}.
\end{remark}
To design the equality-constrained CL parameter update law, consider a convex constrained minimization problem
\begin{align}
    &\hat{\theta}^* = \mathrm{min}_{\hat{\theta}} \; \gamma e^TY\tilde{\theta} + \frac{k_{cl}}{2} \tilde{\theta}^TY_R\tilde{\theta}  \nonumber \\
    \mathrm{s.\;t.}&  \; c(\hat{\theta}) = A\hat{\theta} - d = 0,  \label{eq:ConProb1}
\end{align}
where $k_{cl}$ is a positive constant. Similar to previous section, an equivalent equality constraint free formulation of the convex optimization can be written by formulating a new variable
\begin{align}
    \hat{\theta}_0 &= A^T(AA^T)^{-1}d \nonumber \\
    \hat{\theta} &= \hat{\theta}_0 + Fz \label{eq:newVar}
\end{align}
where $\hat{\theta}_0 \in \mathbb{R}^p$, $F \in \mathbb{R}^{p\times(p-m)}$ and $z \in \mathbb{R}^{p-m}$ are defined in (\ref{eq:newVarGrad}). 

The optimization problem can be written in new variable $z$ as an unconstrained convex optimization problem
\begin{align}
    z^* =& \mathrm{min}_{z} f_{CL}(z) \nonumber\\
    z^* =& \mathrm{min}_{z} \; \gamma e^TY(\theta - \hat{\theta}_0 - Fz) \nonumber \\
    &+ \frac{k_{cl}}{2} (\theta - \hat{\theta}_0 - Fz)^TY_R(\theta - \hat{\theta}_0 - Fz)  \label{eq:UnconProbNewVar}
\end{align}
The convex optimization problems (\ref{eq:ConProb1}) and (\ref{eq:UnconProbNewVar}) are equivalent (see Section 4.1.3 of \cite{Boyd2004}), which means the optimal solution satisfies
\begin{equation}
    \hat{\theta}^* =  \hat{\theta}_0 + Fz^* \label{eq:optimalTheta}
\end{equation}
The gradient dynamics of $z$ can be computed by
\begin{equation}
    \dot{z} = -\nabla_z f_{CL}(z) = \gamma F^TY^Te + k_{cl}F^TY_R\tilde{\theta}
\end{equation}
The parameter update law is given by
\begin{align}
    \hat{\theta} &= \hat{\theta}_0 + Fz \nonumber \\
    \dot{z} &= \gamma F^TY^Te + k_{cl}F^TY_R\tilde{\theta} \label{eq:ConsCLUnImpl}
\end{align}
Since $\tilde{\theta}(t)$ is not measurable, the update law in (\ref{eq:ConsCLUnImpl}) is not implementable, but can be used for analysis. In an implementable form the equality-constrained CL-based parameter update law is given by
\begin{align}
    \hat{\theta} &= \hat{\theta}_0 + Fz \nonumber \\
    \dot{z} &= \gamma F^TY^Te + k_{cl}F^T\sum_{k=1}^hY_k^T(\dot{\hat{x}}_k - u_k - Y_k \hat{\theta}) \label{eq:ConsCLImpl}
\end{align}
where (\ref{eq:SystemModel}) and (\ref{eq:LinParam}) are used.
\subsection{Error dynamics}
The tracking error dynamics using (\ref{eq:ErrorDyn}) can be expressed as
\begin{equation}
    \dot{e} = -k e + Y\tilde{\theta}
\end{equation}
Using (\ref{eq:ConsCLUnImpl}) the parameter estimation error dynamics is given by
\begin{equation}
    \dot{\tilde{\theta}} = -\dot{\hat{\theta}} =  -\gamma FF^TY^Te - k_{cl}FF^TY_R\tilde{\theta} \label{eq:thetaTildeCL}
\end{equation}

\subsection{Stability analysis}
In this subsection, stability of the tracking error dynamics (\ref{eq:ErrorDyn}) and equality-constrained CL-based parameter update law (\ref{eq:ConsCLImpl}) is established using Lyapunov analysis.

\begin{theorem}
    If Assumptions \ref{ass:Assumption1}-\ref{ass:FiniteExcitation} are satisfied, for the system shown in (\ref{eq:SystemModel}), the equality-constrained parameter update law (\ref{eq:ConsCLImpl}) and the adaptive controller (\ref{eq:Control}) ensure globally exponentially stable tracking performance with constrained satisfaction on the parameter estimates and bounded closed-loop signals.
\end{theorem}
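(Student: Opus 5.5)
The plan is a Lyapunov argument with the same function used for the gradient law. Set $y=[e^T,\tilde{\theta}^T]^T$ and
\[
V(y)=\tfrac12 e^Te+\tfrac{1}{2\gamma}\tilde{\theta}^T\tilde{\theta},
\]
which satisfies $\lambda_1\Vert y\Vert^2\le V\le\lambda_2\Vert y\Vert^2$ with the same constants $\lambda_1,\lambda_2$.

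\textbf{Constraint satisfaction.} This part is independent of the dynamics. Since $\mathcal{R}(F)=\mathcal{N}(A)$, we have $AF=0$. Also $A\hat{\theta}_0=AA^T(AA^T)^{-1}d=d$, where $AA^T$ is invertible by Assumption~\ref{ass:Assumption1}. Hence $A\hat{\theta}(t)=d$ for every $t$ and every $z(t)$. It follows that $A\tilde{\theta}=0$, so $\tilde{\theta}\in\mathcal{N}(A)$ and, as in (\ref{eq:orthoNormProj}), $FF^T\tilde{\theta}=\tilde{\theta}$.

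\textbf{Derivative of $V$.} Differentiate $V$ along (\ref{eq:ErrorDyn}) and (\ref{eq:thetaTildeCL}). Two facts simplify the result. First, $\tilde{\theta}^T(I-FF^T)=0$, because $FF^T$ is symmetric and fixes $\tilde{\theta}$. Second, $\tilde{\theta}^TFF^T=\tilde{\theta}^T$ for the same reason. These cancel the cross term exactly as in the previous proof and give
\[
\dot V=-k e^Te-\tfrac{k_{cl}}{\gamma}\tilde{\theta}^TY_R\tilde{\theta}.
\]

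\textbf{Exponential decay.} By Assumption~\ref{ass:FiniteExcitation}, $Y_R\succeq\sigma_1 I$, so
\[
\dot V\le -\min\!\left(k,\tfrac{k_{cl}\sigma_1}{\gamma}\right)\Vert y\Vert^2\le -\tfrac{\min(k,k_{cl}\sigma_1/\gamma)}{\lambda_2}V .
\]
The comparison lemma then gives $V(t)\le V(0)e^{-\beta t}$ with $\beta=\min(k,k_{cl}\sigma_1/\gamma)/\lambda_2$. Combining this with the quadratic bounds on $V$ yields
\[
\Vert y(t)\Vert\le\sqrt{\lambda_2/\lambda_1}\,\Vert y(0)\Vert e^{-\beta t/2}.
\]
This is global exponential convergence of both $e$ and $\tilde{\theta}$. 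Boundedness of the closed-loop signals follows: $e,\tilde{\theta}\in\mathcal{L}_\infty$ gives $\hat{\theta}\in\mathcal{L}_\infty$. Since $F$ has orthonormal columns, $z=F^T(\hat{\theta}-\hat{\theta}_0)$ is bounded. Assuming $x_d$ and $\dot{x}_d$ are bounded, $x$ is bounded, and continuity of $Y$ then gives $Y, u\in\mathcal{L}_\infty$.

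\textbf{Main obstacle.} The analysis above uses the non-implementable law (\ref{eq:ConsCLUnImpl}), whereas the theorem refers to (\ref{eq:ConsCLImpl}). Using $\dot x_k=Y_k\theta+u_k$, the summand of the implementable law can be rewritten as
\[
Y_k^T(\dot{\hat{x}}_k-u_k-Y_k\hat{\theta})=Y_k^TY_k\tilde{\theta}+Y_k^T(\dot{\hat{x}}_k-\dot x_k).
\]
So the implementable law equals (\ref{eq:ConsCLUnImpl}) plus the perturbation $k_{cl}F^T\sum_k Y_k^T\delta_k$, with $\Vert\delta_k\Vert\le\epsilon$. I would first present the argument for $\epsilon=0$, where exact derivatives make the two laws coincide and the exponential result holds as stated. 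For $\epsilon>0$, the perturbation adds a term to $\dot V$ bounded by $\tfrac{k_{cl}}{\gamma}\Vert\tilde{\theta}\Vert\sqrt{\sigma_2 h}\,\epsilon$ (via Cauchy--Schwarz and $\Vert Y_k\Vert\le\sqrt{\sigma_2}$). Absorbing it with Young's inequality gives $\dot V\le-\beta' V+c\epsilon^2$, hence exponential convergence to a ball of radius $O(\epsilon)$, i.e. uniform ultimate boundedness. I would state this as the precise meaning of the claim when the stored derivatives are inexact.
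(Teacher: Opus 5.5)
Your proposal is correct and is essentially the paper's proof. It uses the same Lyapunov function, the same cancellation of the cross term via $FF^T\tilde{\theta}=\tilde{\theta}$, and the FE bound $Y_R\succeq\sigma_1 I$ to obtain exponential decay; the paper compares terms of $V$ directly to get $\dot V\le-\min(2k,2k_{cl}\sigma_1)V$, a slightly sharper rate than your route through $\lambda_2$. Your treatment of the derivative error $\epsilon$ goes beyond the paper, which silently analyzes (\ref{eq:ConsCLUnImpl}) in place of (\ref{eq:ConsCLImpl}). Your conclusion is correct: exact global exponential convergence needs $\epsilon=0$, and otherwise only uniform ultimate boundedness to an $O(\epsilon)$ ball holds.
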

\begin{proof}
Let $y = [e^T, \tilde{\theta}^T]^T \in \mathbb{R}^{n+p}$ be an auxiliary vector. Consider a Lyapunov function
\begin{equation}
    V(y) = \frac{1}{2}e^Te + \frac{1}{2\gamma}\tilde{\theta}^T\tilde{\theta} \label{eq:LyapFunThm2}
\end{equation}
The bounds on $V(y)$ are
\begin{equation}
    \lambda_1\Vert y \Vert^2 \leq V(y) \leq \lambda_2 \Vert y \Vert^2 \label{eq:LyapBoundsThm2}
\end{equation}
where $\lambda_1 = \mathrm{min}\left(\frac{1}{2},\frac{1}{2\gamma}\right)$, $\lambda_2 = \mathrm{max}\left(\frac{1}{2},\frac{1}{2\gamma}\right)$. Using (\ref{eq:ErrorDyn}) and (\ref{eq:thetaTildeCL}), the time derivative of $V(y)$ is computed as
\begin{equation}
    \dot{V} = e^T(-ke +Y\tilde{\theta}) + \tilde{\theta}^T(-FF^TY^Te - \frac{k_{cl}}{\gamma}FF^TY_R \tilde{\theta}) 
\end{equation}
Using $FF^T\tilde{\theta} = \tilde{\theta}$ from (\ref{eq:orthoNormProj}) yields
\begin{align}
    \dot{V} &= -ke^Te - \frac{k_{cl}}{\gamma}\tilde{\theta}^TY_R \tilde{\theta} 
\end{align}
Using Assumption \ref{ass:FiniteExcitation} $Y_R \succ 0$, which leads to
\begin{equation}
    \dot{V} = -ke^Te - \frac{k_{cl}\sigma_1}{\gamma}\tilde{\theta}^T\tilde{\theta} = -\mathrm{min}\left( 2k, 2k_{cl}\sigma_1\right)V. \label{eq:VDotThm2}
\end{equation}
The following bound can be developed on $y$ using (\ref{eq:VDotThm2}) and (\ref{eq:LyapBoundsThm2})
\begin{equation}
    \Vert y(t) \Vert \leq \mu_o\Vert y(t_0) \Vert e^{-\mu_1 t}
\end{equation}
where $\mu_0 = \sqrt{\frac{\lambda_2}{\lambda_1}}$ and $\mu_1 = \mathrm{min}\left( 2k, 2k_{cl}\sigma_1\right)$.
Since $\Vert y \Vert \in \mathcal{L}_\infty$, $\Vert \tilde{\theta} \Vert \in \mathcal{L}_\infty$, which implies $\hat{\theta}\in \mathcal{L}_\infty$. Using (\ref{eq:ConsCLUnImpl}), since $\hat{\theta}_0$ and $F$ are constants, $z \in \mathcal{L}_\infty$.
\end{proof}

\begin{remark}
    For the time period when the history stack data is collected, $\lambda_{min}(\sum_{k=1}^m Y_k^T Y_k)$ is not full rank, thus, Assumption \ref{ass:FiniteExcitation} is not satisfied. For this time period, the equality-constrained parameter update law (\ref{eq:EqualityConstGradUpdateLaw}) can be used, which ensures that all the closed-loop signals are bounded.
\end{remark}

\begin{remark}
    The data points $\mathcal{H}$ collected in history stack $\sum_{k=1}^m Y_k^T Y_k$ can be replaced using singular value maximization algorithm in \cite{chowdhary2013concurrent}, which ensures $\sum_{k=1}^m Y_k^T Y_k$ is always increasing, thus, Lyapunov function (\ref{eq:LyapFunThm2}) serves as a  common Lyapunov function \cite{parikh2019integral}.
\end{remark}

\section{Simulations}
Simulations are carried out to demonstrate the applicability of the equality-constrained parameter update law in adaptive control setting. The following system dynamics similar to \cite{parikh2019integral} is considered
\begin{align}
    \dot{x} = \left[ \begin{matrix}
        x_1^2 & sin(x_2) & 0 & 0 \\
        0 & x_2sin(x_1) & x_1 & x_1x_2
    \end{matrix}\right]\theta + u
\end{align}
where $x = [x_1\;x_2]^T \in \mathbb{R}^2$ is the state, $u\in\mathbb{R}^2$ is the control input, $\theta \in \mathbb{R}^4$ are the true parameters whose values are given by $\theta = [5 \; 5 \; 10 \; 20]^T$. The true parameters satisfy equality constraint $\theta_1 = \theta_2$, which can be written as (\ref{eq:EqualityConstraint}) using 
\begin{equation}
    A = \left[\begin{matrix}
        1 & -1 & 0 & 0
    \end{matrix}\right],\;\; d = 0.
\end{equation}
The desired trajectory is given by
\begin{equation}
    x_d(t) = 10(1-e^{-0.1t})\left[ \begin{matrix}
    \mathrm{sin}(2t) \\ 0.4\mathrm{cos}(3t) 
    \end{matrix}\right]
\end{equation}

\subsection{Simulation for Equality-Constrained Gradient Law}

The controller in (\ref{eq:Control}) is implemented along with the equality-constrained parameter update law given in (\ref{eq:EqualityConstGradUpdateLaw}). The system state is initialized to $x(t_0) = [10\;5]^T$, the parameters are initialized to $\hat{\theta}(t_0) = [4.5\;4.5\;4.5\;15]^T$. This satisfies the parameter constraint at the initial time. The control gain is selected as $k=2 \mathrm{diag}\{10,50\}$ and $\gamma = 0.4$, $F$ is computed using $\mathcal{N}(A)$ given by \begin{equation}
    F = \left[\begin{smallmatrix}
    0.7071 &  0.7071 & 0 & 0 \\
    0 & 0 & 1 & 0 \\
    0 & 0 & 0 & 1
\end{smallmatrix}\right]^T
\end{equation} and $\hat{\theta}_0 = 0$.

The performance of the equality-constrained adaptive controller is shown in Figs. \ref{fig:TrajTrackingSim1} and \ref{fig:ParamComparisonSim1}. It is observed from Fig. \ref{fig:ParamComparisonSim1} that the parameters estimated by the proposed equality-constrained parameter update law satisfy the parameter constraints, i.e., $\hat{\theta}_1(t) = \hat{\theta}_2(t)$ and the tracking errors converge to zero.

\begin{figure}
    \centering
    \includegraphics[width=\linewidth]{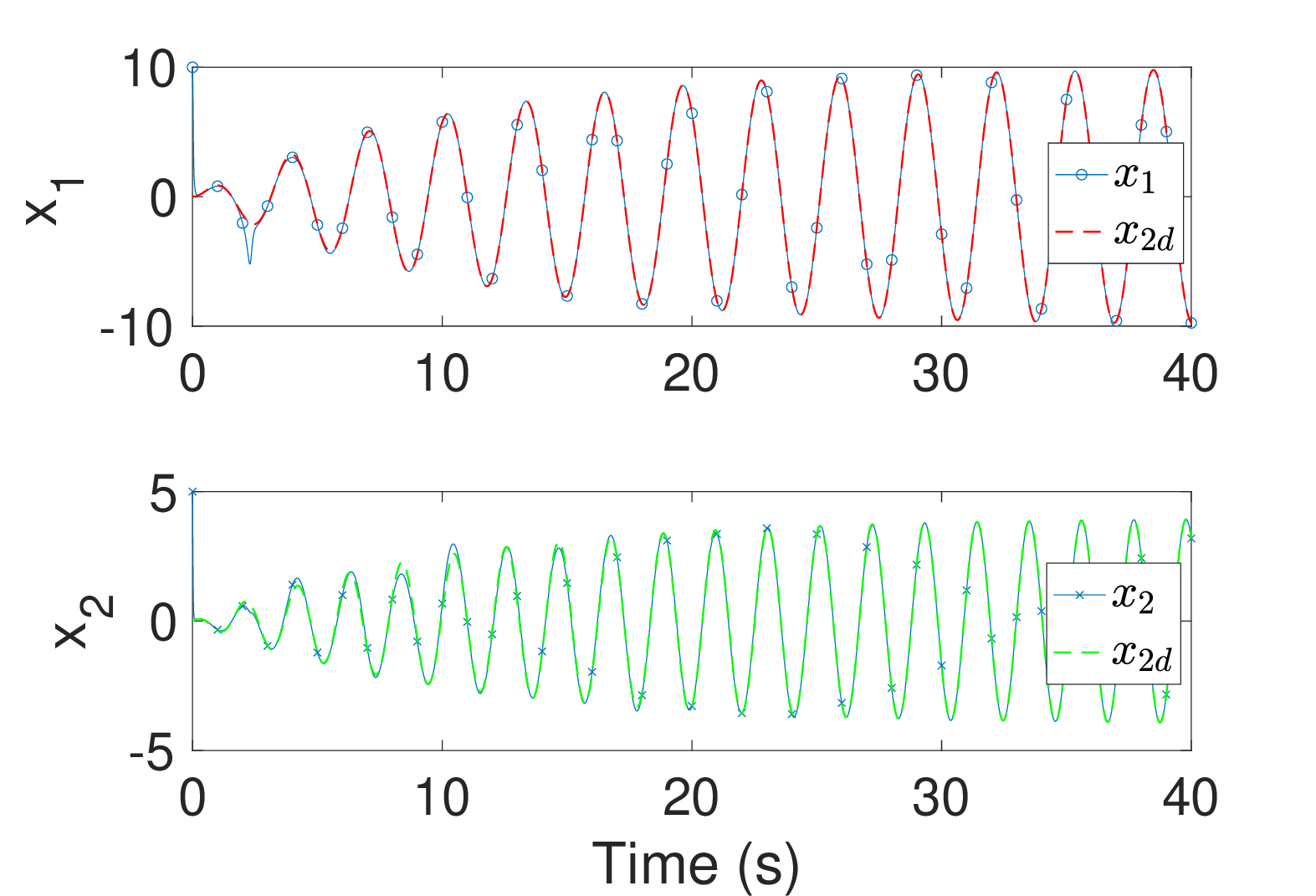}
    \caption{Trajectory tracking with equality-constrained gradient parameter update law.}
    \label{fig:TrajTrackingSim1}
\end{figure}
\begin{figure}
    \centering
    \includegraphics[width=1.0\linewidth]{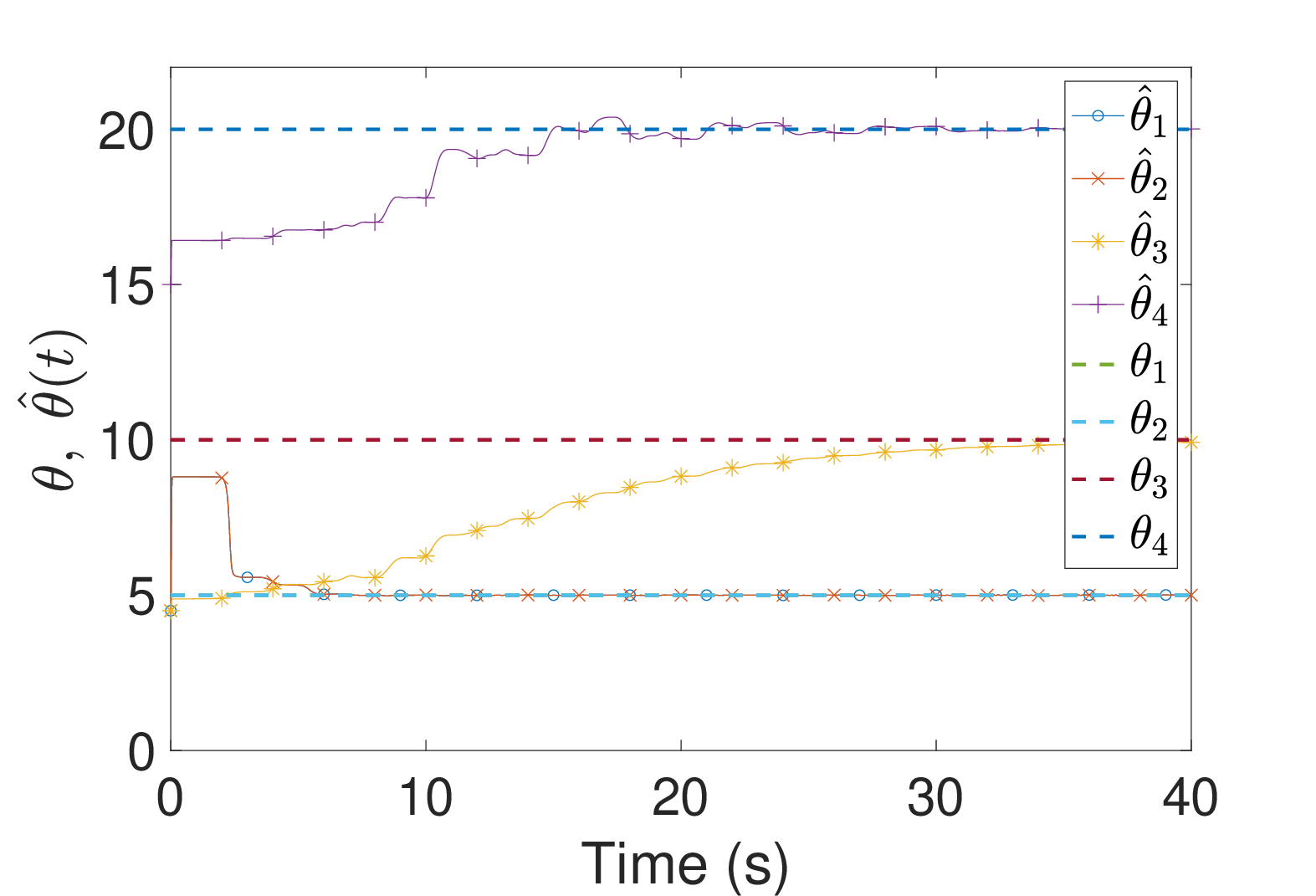}
    \caption{Parameter estimates using equality-constrained gradient parameter update law.}
    \label{fig:ParamComparisonSim1}
\end{figure}

\subsection{Simulation for Equality-Constrained CL Update Law}
In this simulation, equality-constrained CL-based update law is tested. The controller is implemented using constraints on $\hat{\theta}$ for the same dynamics and desired trajectories as that of previous simulation use case but with different parameter set $\theta = [5\;20\;10\;20]^T$. 
The parameter estimates are initialized to $\hat{\theta}(t_0) = [3\;10\;5\;10]^T$ such that the parameter constraints are satisfied. The control gain is selected as $k=2\mathrm{diag}\{10.50\}$ and parameter adaptation law gains are $\gamma=0.05$ and $k_{cl} = 0.0008$. In this case, $A = [0 \; -1 \; 0 \; 1]$ and $d=0$, for which $F$ is computed using $\mathcal{N}(A)$ given by 
\begin{equation}
    F = \left[\begin{smallmatrix}
    0.7071 &  0.5 & 0 & 0.5 \\
    0 & 0 & 1 & 0 \\
    -0.7071 & 0.5 & 0 & 0.5
\end{smallmatrix}\right]^T
\end{equation}
and $\hat{\theta}_0 = 0$.

From the simulation results shown in Figs. \ref{fig:TrajTrackingSim2}-\ref{fig:ParamComparisonECCLSim2}, it is observed that the parameters are estimated while the equality constraints on the parameters are maintained for all time, i.e., $\hat{\theta}_2(t) = \hat{\theta}_4(t)$ and the tracking error converges to zero. The parameter estimates using gradient update law are shown in Fig. \ref{fig:ParamComparisonGradSim2} using the same gains used for the equality-constrained CL update law. The parameters may not be identified correctly by the gradient update law if the regressor is not PE.

\begin{figure}
    \centering
    \includegraphics[width=\linewidth]{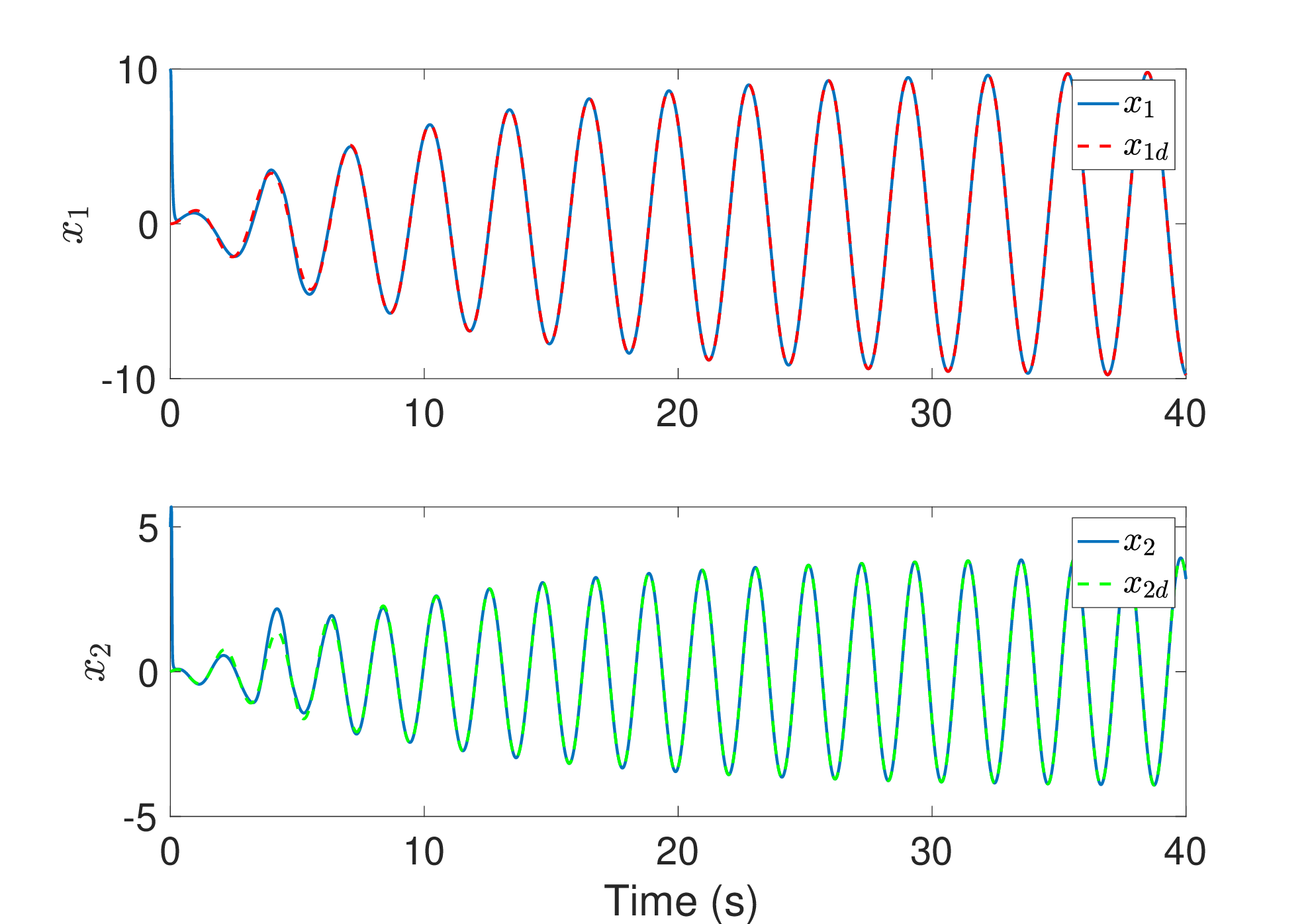}
    \caption{Trajectory tracking with equality-constrained CL parameter update law.}
    \label{fig:TrajTrackingSim2}
\end{figure}
\begin{figure}
    \centering
    \includegraphics[width=1.0\linewidth]{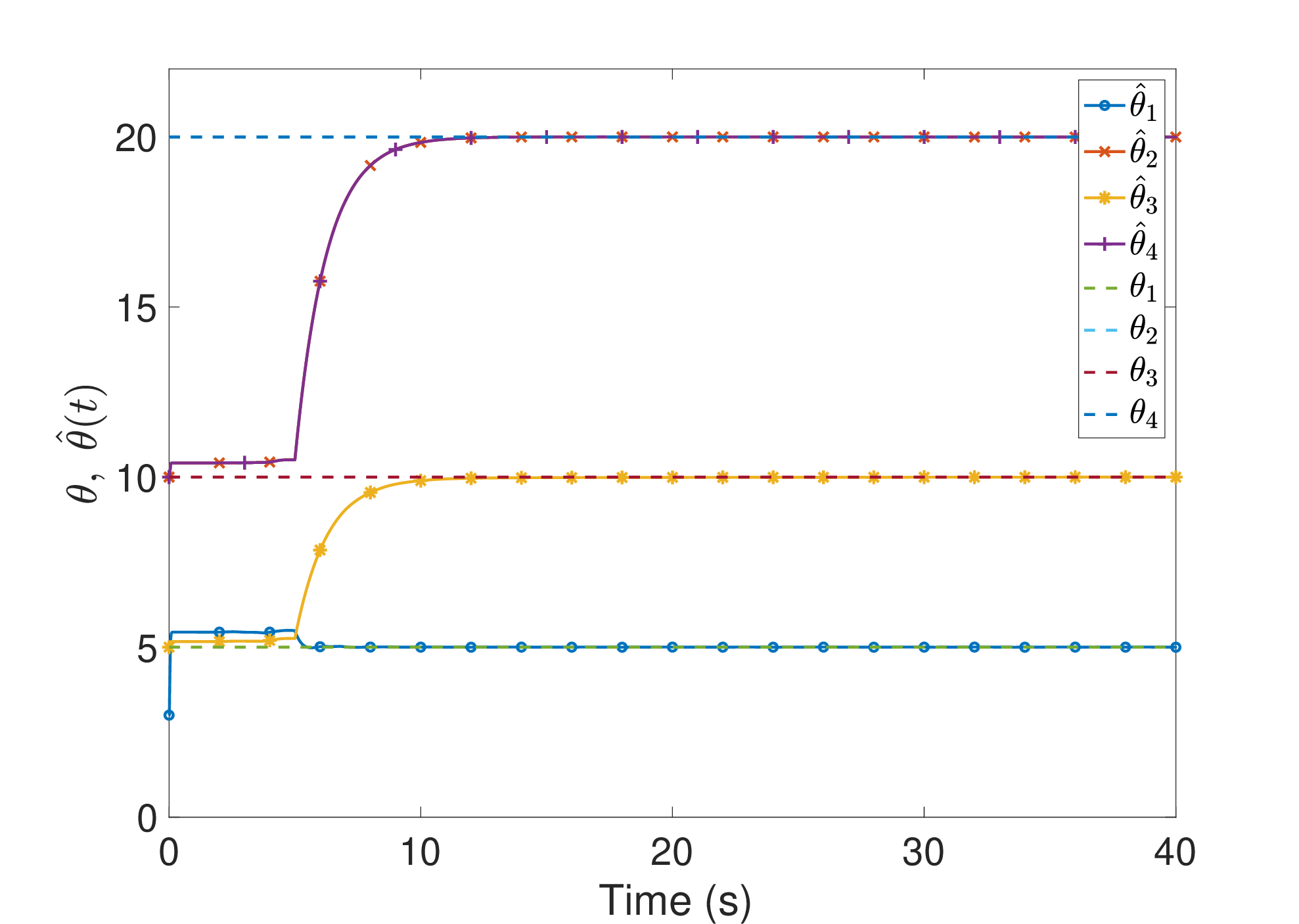}
    \caption{Parameter estimates using equality-constrained CL parameter update law, where $\hat{\theta}_2$ and $\hat{\theta}_4$ satisfy the equality constraint $\hat{\theta}_2-\hat{\theta}_4=0$.}
    \label{fig:ParamComparisonECCLSim2}
\end{figure}
\begin{figure}
    \centering
    \includegraphics[width=1.0\linewidth]{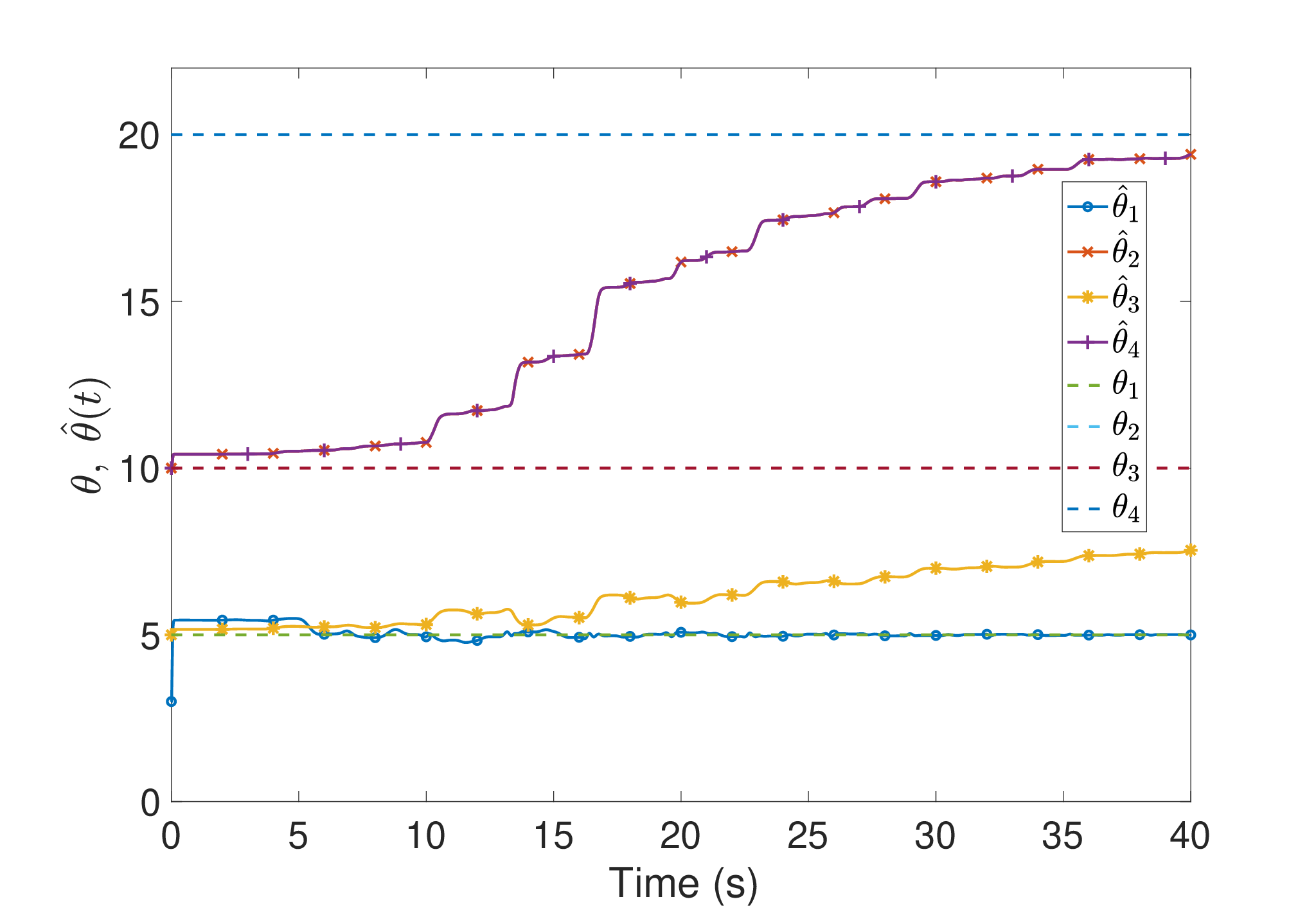}
    \caption{Parameter estimates using equality-constrained gradient update law, where $\hat{\theta}_2$ and $\hat{\theta}_4$ satisfy the equality constraint $\hat{\theta}_2-\hat{\theta}_4=0$.}
    \label{fig:ParamComparisonGradSim2}
\end{figure}

\section{Conclusion}
In this paper, an equality-constrained parameter update law is developed for adaptive tracking control using a convex optimization formulation. The approach of deriving adaptive parameter update law using a minimization problem of an objective function subject to affine equality constraints on the parameter estimates yields continuous update law solutions in a modified form for both gradient and CL-based update laws. Lyapunov stability analysis of the tracking error, parameter estimation error dynamics is conducted which shows that the tracking error converge asymptotically and parameter estimation error remains bounded for gradient-based update law and exponentially under FE condition for the CL-based update law. By construction, the affine equality constraints on the parameter estimates are always satisfied. Simulation results validate the proposed method in both the cases where equality constraints are preserved on the parameter estimates.

\bibliographystyle{IEEEtran}
\bibliography{RCL_Complete}

@book{ioannou1996robust,
  title={Robust adaptive control},
  author={Ioannou, Petros A and Sun, Jing},
  volume={1},
  year={1996},
  publisher={PTR Prentice-Hall Upper Saddle River, NJ}
}

@article{boffi2021implicit,
  title={Implicit regularization and momentum algorithms in nonlinearly parameterized adaptive control and prediction},
  author={Boffi, Nicholas M and Slotine, Jean-Jacques E},
  journal={Neural Computation},
  volume={33},
  number={3},
  pages={590--673},
  year={2021}
}

@BOOK{Boyd2004,
  title = {Convex Optimization},
  publisher = {Cambridge University Press},
  year = {2004},
  author = {Boyd, S. and Vandenberghe, L.}
}

@ARTICLE{chowdhary2013concurrent,
  author = {Chowdhary, Girish and Yucelen, Tansel and Muhlegg, Maximillian and
	Johnson, Eric N},
  title = {Concurrent learning adaptive control of linear systems with exponentially
	convergent bounds},
  journal = {International Journal of Adaptive Control and Signal Processing},
  year = {2013},
  volume = {27},
  pages = {280--301},
  number = {4},
  publisher = {Wiley Online Library}
}

@BOOK{Slotine1991,
  title = {Applied Nonlinear Control},
  publisher = {Englewood Cliffs, NJ: Prentice-Hall},
  year = {1991},
  author = {Slotine, J.-J. E. and Li, W.}
}

@article{parikh2019integral,
  title={Integral concurrent learning: Adaptive control with parameter convergence using finite excitation},
  author={Parikh, Anup and Kamalapurkar, Rushikesh and Dixon, Warren E},
  journal={International Journal of Adaptive Control and Signal Processing},
  volume={33},
  number={12},
  pages={1775--1787},
  year={2019}
}

@article{fradkov1979speed,
  title={Speed-gradient scheme and its application in adaptive control problems},
  author={Fradkov, AL},
  journal={Automation and Remote Control},
  volume={9},
  pages={90--101},
  year={1979}
}

@inproceedings{lee2018natural,
  title={A natural adaptive control law for robot manipulators},
  author={Lee, Taeyoon and Kwon, Jaewoon and Park, Frank C},
  booktitle={IEEE/RSJ International Conference on Intelligent Robots and Systems},
  pages={1--9},
  year={2018}
}

@article{moghe2022projection,
  title={Projection scheme and adaptive control for symmetric matrices with eigenvalue bounds},
  author={Moghe, Rahul and Akella, Maruthi R},
  journal={IEEE Transactions on Automatic Control},
  volume={68},
  number={3},
  pages={1738--1745},
  year={2022}
}

@ARTICLE{DaniAACRL2025,
  author={Dani, Ashwin P. and Bhasin, Shubhendu},
  journal={IEEE Open Journal of Control Systems}, 
  title={Adaptive Actor-Critic Based Optimal Regulation for Drift-free Nonlinear Systems}, 
  year={2025},
  volume={4},
  pages={117-129}
}

@inproceedings{DaniCDC2025,
    author = {Dani, Ashwin P},
    title = {Constrained Parameter Update Law using Inverse Barrier Function for Adaptive Control},
    booktitle = {IEEE Conference on Decision and Control},
    year = {2025},
    pages = {}
}

@article{somers2024online,
  title={Online accelerated data-driven learning for optimal feedback control of discrete-time partially uncertain systems},
  author={Somers, Luke and Haddad, Wassim M and Kokolakis, Nick-Marios T and Vamvoudakis, Kyriakos G},
  journal={International Journal of Adaptive Control and Signal Processing},
  volume={38},
  number={3},
  pages={848--876},
  year={2024}
}

@article{cai2006sufficiently,
  title={A sufficiently smooth projection operator},
  author={Cai, Zhijun and de Queiroz, Marcio S and Dawson, Darren M},
  journal={IEEE Transactions on Automatic Control},
  volume={51},
  number={1},
  pages={135--139},
  year={2006}
}

@article{ortega2020new,
  title={New results on parameter estimation via dynamic regressor extension and mixing: Continuous and discrete-time cases},
  author={Ortega, Romeo and Aranovskiy, Stanislav and Pyrkin, Anton A and Astolfi, Alessandro and Bobtsov, Alexey A},
  journal={IEEE Transactions on Automatic Control},
  volume={66},
  number={5},
  pages={2265--2272},
  year={2020}
}

@article{annaswamy2021historical,
  title={A historical perspective of adaptive control and learning},
  author={Annaswamy, Anuradha M and Fradkov, Alexander L},
  journal={Annual Reviews in Control},
  volume={52},
  pages={18--41},
  year={2021}
}

@book{krstic1995nonlinear,
  title={Nonlinear and adaptive control design},
  author={Krstic, Miroslav and Kokotovic, Petar V and Kanellakopoulos, Ioannis},
  year={1995},
  publisher={John Wiley \& Sons, Inc.}
}

@article{le2024accelerated,
  title={Accelerated gradient approach for deep neural network-based adaptive control of unknown nonlinear systems},
  author={Le, Duc M and Patil, Omkar Sudhir and Nino, Cristian F and Dixon, Warren E},
  journal={IEEE Transactions on Neural Networks and Learning Systems},
  volume={36},
  number={4},
  pages={6299--6313},
  year={2024}
}

@article{le2021real,
  title={Real-time modular deep neural network-based adaptive control of nonlinear systems},
  author={Le, Duc M and Greene, Max L and Makumi, Wanjiku A and Dixon, Warren E},
  journal={IEEE Control Systems Letters},
  volume={6},
  pages={476--481},
  year={2021}
}

@article{hart2023lyapunov,
  title={Lyapunov-based physics-informed long short-term memory {(LSTM)} neural network-based adaptive control},
  author={Hart, Rebecca G and Griffis, Emily J and Patil, Omkar Sudhir and Dixon, Warren E},
  journal={IEEE Control Systems Letters},
  volume={8},
  pages={13--18},
  year={2023}
}

\end{document}